\definecolor{DarkGreen}{rgb}{0.1,0.5,0.1}
\definecolor{DarkRed}{rgb}{0.5,0.1,0.1}
\definecolor{DarkBlue}{rgb}{0.1,0.1,0.5}
\newcommand{\blue}[1]{\textcolor{blue}{#1}}  
\newcommand{\white}[1]{\textcolor{white}{#1}}  
\def\>{\rangle} 
\def\<{\langle}
\def\Pr{{\rm Pr} } 
 \def\case#1{{\left\{  
	\begin{array}{ll}
  #1
	\end{array}
 \right.   }}
\def\Pr{{\rm Pr} }
\newtheorem{definitionenv}{Definition}
\newtheorem{lemmaenv}[definitionenv]{Lemma}
\newtheorem{theoremenv}[definitionenv]{Theorem}
\newtheorem{corollaryenv}[definitionenv]{Corollary}
\newtheorem{propositionenv}[definitionenv]{Proposition}
\newtheorem{factenv}[definitionenv]{Fact}
\newtheorem{conjectureenv}[definitionenv]{Conjecture}
\newtheorem{exampleenv}{Example}
\newtheorem{remarkenv}[definitionenv]{Remark}
\newtheorem{app-lemmaenv}[section]{Lemma}
\newenvironment{remark}{\begin{remarkenv}\rm}{\end{remarkenv}}
\newenvironment{lemma}{\begin{lemmaenv}\rm}{\end{lemmaenv}}
\newenvironment{theorem}{\begin{theoremenv}\rm}{\end{theoremenv}}
\newenvironment{app-lemma}{\begin{app-lemmaenv}\rm}{\end{app-lemmaenv}}
\def\thickhline{%
  \noalign{\ifnum0=`}\fi\hrule \@height \thickarrayrulewidth \futurelet
   \reserved@a\@xthickhline}
\def\@xthickhline{\ifx\reserved@a\thickhline
               \vskip\doublerulesep
               \vskip-\thickarrayrulewidth
             \fi
      \ifnum0=`{\fi}}
\newlength{\thickarrayrulewidth}
\newcommand{\Tr}{\ensuremath{\mathrm{Tr}}}
\def\For{\blue{For}\xspace} 
\def\EndFor{\blue{EndFor}\xspace} 
\def\If{\blue{If}\xspace}
\def\function{{\blue{Function}}\xspace} 
\def\algofirst{{MeasQbits}\xspace}
\def\algosecond{{FindNbs}\xspace}
\def\wt{{\rm wt}}
\def\Pr{{\rm Pr} } 
 \def\case#1{{\left\{  
	\begin{array}{ll}
  #1
	\end{array}
 \right.   }}
\begin{document}
\title{Learning quantum graph states with product measurements}
\author{%
  \IEEEauthorblockN{Yingkai Ouyang}
  \IEEEauthorblockA{
Department of Electrical and Computer Engineering,\\
Centre of Quantum Technologies,\\
   National University of Singapore\\
   Email:  oyingkai@gmail.com}
\and
  \IEEEauthorblockN{Marco Tomamichel}
  \IEEEauthorblockA{
Department of Electrical and Computer Engineering,\\
Centre of Quantum Technologies\\
   National University of Singapore\\}
}

\maketitle

\begin{abstract}
We consider the problem of learning $N$ identical copies of an unknown $n$-qubit
quantum graph state with product measurements. 
These graph states have corresponding graphs where every vertex has exactly $d$ neighboring vertices.
Here, we detail an explicit algorithm that uses product measurements on multiple identical copies of such graph states to learn them.
When $n \gg d$ and $N = O(d \log(1/\epsilon)  + d^2 \log n ),$
this algorithm correctly learns the graph state with probability at least $1- \epsilon$.
From channel coding theory, we find that for arbitrary joint measurements on graph states, any learning algorithm achieving this accuracy requires at least $\Omega(\log (1/\epsilon) + d \log n)$ copies when $d=o(\sqrt n)$.
We also supply bounds on $N$ when every graph state encounters identical and independent depolarizing errors on each qubit.
\end{abstract}

\date{\today} 
\maketitle
 
\section{Introduction}
Learning of quantum states has been investigated in a multitude of settings.
In the most traditional setting, quantum tomography \cite{nielsen-chuang} studies this learning problem,
and this topic still attracts plenty of attention \cite{aaronson2007learnability,aaronson2019shadow,aaronson2019online,huang2020predicting}.
In quantum tomography, we learn a description of the quantum state to a prescribed degree of accuracy given multiple identical copies. 
Studies in quantum tomography are concerned with obtaining bounds on the minimum number $N$ of such copies for different families of quantum states.
Suppose that $\rho = |{\psi}\rangle\!\langle \psi |$ is an $n$-qubit pure state, and that the estimate of $\rho$ given by $\hat \rho$ is close to $\rho$ (for some constant precision, for example in trace distance) with probability at least $1-\epsilon$. 
For pure states, \cite[Sec. IIA]{kueng2017low} and \cite{haah2017sample} showed that for any measurement strategy, even when entangling operations are applied across multiple copies of $\rho$, we have $N = \tilde{\Theta} (2^{n} + \log \frac{1}{\epsilon} )$, omitting terms linear in $n$.
This implies that the optimal learning strategy for determining an arbitrary $n$-qubit pure state requires $N$ to be exponential in $n$.

Imposing additional structure (apart from purity) on quantum states allows substantial reduction of the number of copies $N$ required to learn $\rho$.
For instance, learning an unknown stabilizer state from the set of all stabilizer states using collective measurements can be achieved with $N$ linear in $n$ \cite{aaronson_pirsa,zhao2016fast,montanaro2017learning}. 
Bounds on $N$ for the learning of subsets of stabilizer states \cite{montanaro2020quantum} or quantum states that are stabilizer pseudomixtures \cite{lai2021learning} have also recently been obtained.

Learning a quantum state using measurements that act on a single or multiple copies 
of an $n$-qubit state $|\psi\>$ can be challenging to implement.
This is because the entangling operations across multiple qubits that these measurements require can be difficult to implement in an error-free way in practice. 
It would be highly beneficial if we could learn quantum states by simply performing product measurements.
A simple way to perform product measurements on $N$ copies of $|\psi\>$ is to measure each qubit of $|\psi\>$ either in the computation basis $B_0 = \{|0\>,  |1\>\}$ or in the Hadamard basis $B_1 = \{|+\> , |-\>\}$,
where $|\pm\> =\frac{|0\> \pm |1\>}{\sqrt 2}$ and $|0\>$ and $|1\>$ form an orthonormal basis of a qubit. 
If we measure a qubit in the basis $B_0$, we denote the measurement outcomes corresponding to $|0\>$ and $|1\>$ to be 0 and 1 respectively.
If we measure a qubit in the basis $B_1$, we denote the measurement outcomes corresponding to $|+\>$ and $|-\>$ to be 0 and 1 respectively.
While product measurements are easy to describe, 
it is unclear how many copies $N$ of $|\psi\>$ a learning algorithm that uses product measurements would need.

In this paper, we consider learning a family of graph states using product measurements. Graph states \cite{schlingemann2001quantum,dur2003multiparticle,raussendorf2003measurement,hein2004multiparty} are quantum states that correspond directly to undirected simple graphs, 
and the set of all graph states is equivalent under local Clifford operations to stabilizer states \cite{schlingemann2002stabilizer,van2005local,zeng2007local}. 
Recently Montanaro and Shao evaluated an upper bound on $N$ the number of copies of $|\psi\>$ required to learn the underlying graph of degree $d$,
and found that $N = O( d \log n )$ when collective measurements are performed on pairs of graph states \cite{montanaro2020quantum}. 

In this paper, we detail an explicit algorithm that uses product measurements to learn $n$-qubit graph states of degree $d$ when $n \gg d$. Product measurements are much simpler to perform than independent measurements; while independent measurements act on a single copy of the $n$-qubit state $|\psi\>$, product measurements measure every qubit in every $|\psi\>$ individually.
For this algorithm to correctly learn what the graph state is with probability at least $1- \epsilon$, it suffices to require that the number of copies $N$ satisfies
\begin{align}
N \ge 4ed \log(n/\epsilon) + 4ed^2 \log(ne),
\end{align}
where $\log$ denotes the natural logarithm and $e$ is Euler's constant.
From channel coding theory, we find that for arbitrary measurements on the graph states, we must have 
\begin{align}
N \ge d \log_4(nd),
\end{align}
thereby also establishing the asymptotic optimality of Montanaro and Shao's scheme.
We also supply bounds on $N$ when using our algorithm and also for any algorithm when every graph state $|\psi\>$ encounters a qubit depolarizing channel that acts independently and identically on each of its qubits.

\section{Preliminaries}

Let $G_{n,d}$ denote the set of graphs with $n$ vertices, and with vertex degree equal to $d$ \cite{godsil2001algebraic}. 
These graphs are called $d$-regular graphs on $n$ vertices. Here, we require that $d=o(\sqrt n)$. For any graph $g \in G_{n,d}$, we use $V=\{1,\dots, n\}$ denote the set of vertices, and $E(g)$ to denote the edge set of $g$.
Given any vertex $v$, let $N_g(v) = \{ i \in V : \{v,i\} \in E(g) \} $ denote the neighbor set of $v$. 
Note that $N_g(v) \subset V \backslash \{ v\}$.
The graph state $|g\>$ is defined to be the unique stabilizer state stabilized by the stabilizer generated by the generators
\begin{align}
 W_v =  X_v \prod_{j  \in N_g(v)} Z_j,
\end{align}
 for $v \in V$,  where $X_k$ and $Z_j$ denote the Pauli $X$ operator (a bit-flip) acting on the $v$th qubit and identity everywhere else, and the Pauli operator $Z$ (a phase-flip) acting on the $j$th qubit respectively.
 This means that $|g\>$ is the unique state satisfying the equations
 \begin{align}
 W_v |g\> = |g\> , \quad \forall v \in V. 
 \end{align}
 We define a function Oracle() which is function that always returns as its output the graph state $|g\>$.

In a graph state, each qubit corresponds to a vertex in a graph. In the preparation of a graph state, each qubit is first initialized as a $|+\>$ state. 
Second, for every edge in the graph, a controlled phase gate is applied between the corresponding qubits. 
Since the controlled phase gate is invariant under swapping of the pair of qubits it acts on, the distinction between the control and target qubit is not important. 
Moreover, since all controlled phase gates are diagonal in the computation basis, they must commute. 
This implies that the order in which controlled phase gates are applied is not important.

 Given any binary vector ${\bf v}$, we let $\wt({\bf v})$ denote its Hamming weight.
 Given that an $n$-bit binary vector ${\bf v}$ we define 
 \begin{align}
{\rm Maj}({\bf v}) = \case{
1 & {\rm if\ } \wt({\bf v}) > n/2\\ 
\hat Q & {\rm if\ } \wt({\bf v}) = n/2\\ 
0 & {\rm if\ } \wt({\bf v}) < n/2\\ 
}, \label{def:maj}
 \end{align}
 where $\hat Q$ is a random variable such that $\Pr[\hat Q=j]=1/2$ for $j=0,1$.
 
 \section{Learning without noise}
 
 \subsection{Algorithm}
 \label{ssec:algo}
 
We could learn what $g$ is, by applying Algorithm \algofirst which performs product measurements on $N$ copies of an unknown graph state $|g\>$. This graph state is obtained by querying Oracle() a total of $N=mr$ times where $m$ and $r$ are positive integers.
\newline

\noindent  \function\ \algofirst\!($m,r,w)$ \newline
\noindent{\bf Input:} Positive integers $m$ and $r$ and integer $w \in \{1,\dots,n\}$.\newline
\noindent{\bf Output:} Bits $x_{k,j}, m_{t,k,j}$ for $j \in V$, $k=1,\dots,m$, and $t = 1,\dots, r$. \newline
\noindent 1. {\For $k$ = 1:$m$} \newline
\noindent 2. \quad Pick ${\bf x} = (x_{k,1},\dots,x_{k,n})$ uniformly at random \newline
\noindent \white{1.}\quad\   from all $n$-bit strings where $\wt({\bf x}) = w$\newline
\noindent 3. \quad  {\For $t$ = 1:$r$} \newline
\noindent 4. \quad  \quad  Set $|g\> \gets $Oracle()\newline
\noindent 5. \quad \quad  {\For $j$ = 1:$n$} \newline
\noindent 6. \quad \quad \quad  {Measure qubit $j$ of $|g\>$ in basis $B_{x_{k,j}}$.} \newline
\noindent 7. \quad \quad \quad  Set $m_{t,k,j}$ to be the measurement outcome.\newline
\noindent 8. \quad \quad  {\EndFor} \newline
\noindent 9. \quad  {\EndFor} \newline
\noindent 10.  {\EndFor} \newline

The output of \algofirst is encoded in the binary matrices $(X)_{k,j}$ and $(M_t)_{k,j}$, where $x_{k,j}$ and $m_{t,k,j}$ denote the matrix elements in the $k$th row and $j$th columns of $(X)_{k,j}$ and $(M_t)_{k,j}$ respectively.
Based on this information, we estimate the most likely neighbor set of each vertex $v$ using the following algorithm.
\newline

\noindent \function\ \algosecond\!$((X)_{k,j}, (M_1)_{k,j} , \dots, (M_r)_{k,j})$ \newline
\noindent {\bf Input}: Binary matrices $(X)_{k,j}$, $(M_1)_{k,j}, \dots , (M_r)_{k,j})$.
\newline
\noindent {\bf Output}: $\mathcal S_1,\dots, \mathcal S_n$, where each  $\mathcal S_v$ is a tuple with each component that are subsets of $V \backslash \{v\}$ and with $|\alpha_i|=d$.\newline
\noindent 1.  {Set $\mathcal S_v = \{ \alpha \subset V \backslash \{v\} : |\alpha|=d \}$ for $v \in V$}\newline
\noindent 2. {\For $k = 1:m$}\newline
\noindent 3. \quad {Define the index set $W = \{ j : x_{k,j} = 1 \}$}\newline
\noindent 4. \quad {\For $v \in W$}\newline
\noindent 5. \quad \quad  {\For $\alpha = \{a_1, \dots , a_{d}\} \subset V \backslash W$}\newline
\noindent 6. \quad\quad \quad {\For $t  = 1:r$}\newline
\noindent 7. \quad \quad \quad\quad  {Set $s_{t,k,v,\alpha} = 
{\rm mod}(m_{t,k,v} + \sum_{j=1}^{d} m_{t,k,a_j},2)$}.\newline
\noindent 9. \quad \quad \quad {\EndFor}\newline
\noindent 8. \quad \quad\quad {Set ${\bf s}_{k,v,\alpha} = 
(s_{1,k,v,\alpha},\dots, s_{t,k,v,\alpha} )$}.\newline
\noindent 10. \quad \quad \quad {\!\!\!\If ${\rm Maj}({\bf s}_{k,v,\alpha})=1$, delete $\alpha$ from $\mathcal S_v$.} \newline
\noindent 11. \quad \quad {\EndFor}\newline
\noindent 12. \quad {\EndFor}\newline
\noindent 13.  {\EndFor}\newline

\noindent  \function LearnGraphState$(m,r,w)$ \newline
\noindent{\bf Input:} Positive integers $m$ and $r$ and integer $w \in \{1,\dots,n\}$.\newline
\noindent {\bf Output}: $(\eta_1 , \dots, \eta_n)$.\newline
\noindent 1.  {Set $( (X)_{k,j}, (M_1)_{k,j},\dots, (M_r)_{k,j} ) =$ MeasQbits$(m,r,w)$}\newline
\noindent 2. {Set $(\mathcal S_1, \dots, \mathcal S_n)=$ \algosecond$((X)_{k,j}, (M_1)_{k,j},\dots, (M_r)_{k,j} )$}\newline
\newline

The algorithm LearnGraphState succeeds if for all $v=1,\dots, n$, we have $\mathcal S_v = \{ N_g(v) \}$.
This algorithm also uses $N = mr$ copies of $|g\>$ to learn what $g$ is.

\subsection{Analysis}
Here, we analyze the algorithms introduced in Section \ref{ssec:algo} assuming that we obtain noiseless copies of $|g\>$ from querying Oracle().

Lemma \ref{lem:alpha-sampling} gives the probability that for a random $v \in V$ belongs to a fixed $d$-set $\alpha$ does not contain $v$, and also is a subset of the complement of a random $w$-set $W$. This corresponds to the probability that a given $\alpha$ is sampled at the $k$th iteration of Algorithm FindNbs.
\begin{lemma}
\label{lem:alpha-sampling}
Now let $\alpha$ be any fixed $d$-set $\alpha$ where $\alpha \subset V$.
Let $v \in V$ be random and let $W$ be a random subset of $V$ with cardinality $w$.
Then $\Pr[\alpha \cap W  = \emptyset \wedge v \in W] = p_{\rm samp}$ where
\begin{align}
p_{\rm samp} =  \frac{ w \binom {n-d}{w} } { n \binom{n}{w}} .
\end{align}
\end{lemma}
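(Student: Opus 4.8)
The plan is to compute the joint probability by direct enumeration over the product sample space of the pair $(v,W)$, using that $v$ and $W$ are drawn independently. Since $v$ is uniform over the $n$ vertices and $W$ is uniform over the $\binom{n}{w}$ subsets of $V$ of size $w$, the pair $(v,W)$ is uniform over a space of $n \binom{n}{w}$ equally likely outcomes. So the whole task reduces to counting favorable pairs.

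I would then count the pairs $(v,W)$ satisfying both $\alpha \cap W = \emptyset$ and $v \in W$. The condition $\alpha \cap W = \emptyset$ forces $W$ to be a $w$-subset of the $(n-d)$-element set $V \setminus \alpha$, and there are exactly $\binom{n-d}{w}$ such subsets. Having fixed any such $W$, the requirement $v \in W$ restricts $v$ to the $w$ elements of $W$, contributing a factor $w$. Hence the number of favorable pairs is $w\binom{n-d}{w}$, and dividing by the total count $n\binom{n}{w}$ yields exactly $p_{\rm samp} = \frac{w\binom{n-d}{w}}{n\binom{n}{w}}$.

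As a cross-check, I would also present the conditioning argument: factor the joint probability as $\Pr[\alpha \cap W = \emptyset]\cdot \Pr[v \in W \mid \alpha \cap W = \emptyset]$. The first factor is $\binom{n-d}{w}/\binom{n}{w}$ by the same avoidance count, while the second equals $w/n$, because $v$ is independent of $W$ and uniform, so conditioned on any fixed $W$ of size $w$ the chance that $v$ lands in $W$ is $w/n$ regardless of which $W$ it is. Multiplying recovers the same closed form.

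There is no substantive obstacle in this argument; the only points requiring care are stating the independence of $v$ and $W$ explicitly (so that conditioning on $W$ avoiding $\alpha$ leaves the distribution of $v$ unchanged) and noting that the two events are consistent, since $v \in W$ together with $\alpha \cap W = \emptyset$ automatically gives $v \notin \alpha$, so no pair is counted that violates either constraint. Once independence is invoked, the combinatorial count closes the proof in a single line.
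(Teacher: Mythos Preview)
Your proposal is correct and follows essentially the same approach as the paper: compute $\Pr[\alpha\cap W=\emptyset]=\binom{n-d}{w}/\binom{n}{w}$ and $\Pr[v\in W]=w/n$ and multiply. Your justification for the multiplication (independence of $v$ from $W$, so conditioning on any event about $W$ leaves $\Pr[v\in W]=w/n$) is in fact cleaner than the paper's stated reason, and your direct enumeration over $(v,W)$ pairs is a nice additional sanity check the paper omits.
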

\begin{proof}
Now $\Pr[v \in W] = w/n$ 
and
$\Pr[\alpha \cap W = \emptyset] = \binom {n-d}{w} /  \binom{n}{w} $.
The events $v \in W$ and $\alpha \cap W$ are independent because $v$ is a random variable that is independent of the non-random $\alpha$. 
Hence the joint probability is the product of their individual probabilities.
\end{proof}
The sampling probability $p_{\rm samp} $ is used later in our analysis, and we will need bounds on it.

\begin{lemma}
\label{lem:psamp}
Suppose that $d \ge 2$ and $n \ge 2d^2$. 
Then 
\begin{align}
\frac{1}{2ed} \le 
\left(\frac{1}{ed}  - \frac{1}{4ed(1-d/n)}\right)(1-d/n)
\le p_{\rm samp}  \le 
\frac{1}{ed}.
\end{align}
\end{lemma}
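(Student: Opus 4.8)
The plan is to reduce $p_{\rm samp}$ to a product form and then sandwich it. Writing out the binomial coefficients and cancelling factorials gives
\[
p_{\rm samp} = \frac{w}{n}\,\frac{\binom{n-d}{w}}{\binom{n}{w}} = \frac{w}{n}\prod_{i=0}^{d-1}\frac{n-w-i}{n-i}.
\]
Both bounds will follow from estimating the product $q := \prod_{i=0}^{d-1}\frac{n-w-i}{n-i}$, with $w$ taken to be the nearest integer to $n/(d+1)$, which is where the continuous relaxation $x(1-x)^d$ of $\frac wn q$ (with $x=w/n$) is maximized.

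For the upper bound, each factor of $q$ satisfies $\frac{n-w-i}{n-i}\le\frac{n-w}{n}=1-\frac wn$, since this is equivalent to $0\le wi$. Hence $p_{\rm samp}\le\frac wn(1-\frac wn)^d\le\max_{x\in[0,1]}x(1-x)^d$, and the maximum is attained at $x=\frac1{d+1}$ with value $\frac{d^d}{(d+1)^{d+1}}$. It then remains to check $\frac{d^d}{(d+1)^{d+1}}\le\frac1{ed}$; rearranging, this is precisely $e\le(1+\frac1d)^{d+1}$, the standard fact that $(1+\frac1d)^{d+1}$ decreases to $e$. Notice this argument uses no assumption on $n$ and gives $p_{\rm samp}\le\frac1{ed}$ for every $w$.

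For the lower bound I would fix $w=\lceil n/(d+1)\rceil$, so that $\frac1{d+1}\le\frac wn\le\frac1{d+1}+\frac1n$. Factoring the ideal value out of each term, the identity $\frac{n-w-i}{n-i}=(1-\frac wn)\big(1+\frac{wi}{n(n-w-i)}\big)^{-1}\ge(1-\frac wn)\big(1-\frac{wi}{n(n-w-i)}\big)$ yields $q\ge(1-\frac wn)^d\big(1-\sum_{i=0}^{d-1}\frac{wi}{n(n-w-i)}\big)$. The hypothesis $n\ge 2d^2$ keeps $w$ and all the shifts $i\le d-1$ small relative to $n$, which lets me bound the correction sum (which is $O(d/n)$) and the deviations of $(1-\frac wn)^d$ and $\frac wn$ from their ideal values by explicit constants. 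Assembling these estimates produces a bound of exactly the advertised shape, a main term $\frac{1-d/n}{ed}$ minus an error term $\frac1{4ed}$, i.e.\ $\left(\frac1{ed}-\frac1{4ed(1-d/n)}\right)(1-d/n)$. The leftmost inequality is then pure algebra: this expression equals $\frac1{ed}(\frac34-\frac dn)$, and since $n\ge2d^2\ge4d$ (using $d\ge2$) we get $\frac dn\le\frac14$, so $\frac34-\frac dn\ge\frac12$ and the bound $\frac1{2ed}$ follows.

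The main obstacle is the lower bound. The subtlety is that $\frac1{ed}$ strictly exceeds the true leading order $\frac{d^d}{(d+1)^{d+1}}$ of $p_{\rm samp}$, so one cannot simply lower bound $q$ by its ideal value; instead the stated estimate deliberately over-shoots with the main term $\frac{1-d/n}{ed}$ and absorbs the resulting discrepancy, together with the second-order product corrections and the integer rounding of $w$, into the single error term $\frac1{4ed}$. Making every one of these contributions provably fit inside that budget — rather than leaving a messier, $d$- and $n$-dependent remainder — is the delicate part, and it is exactly where the assumption $n\ge2d^2$ is used.
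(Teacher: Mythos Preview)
Your approach differs from the paper's in two places: the product decomposition (you use the $d$ factors $\prod_{i<d}\frac{n-w-i}{n-i}$, the paper uses the $w$ factors $\prod_{j<w}\bigl(1-\frac{d}{n-j}\bigr)$) and the choice of $w$ (you take $\lceil n/(d+1)\rceil$, the paper takes $\lceil(n-d)/d\rceil$). For the upper bound your route is cleaner: bounding each factor by $1-\frac wn$, maximizing $x(1-x)^d$ over $x\in[0,1]$, and invoking $(1+1/d)^{d+1}\ge e$ is shorter than the paper's calculus optimization of $\frac wn(1-\frac dn)^w$ and, as you note, yields $p_{\rm samp}\le 1/(ed)$ for \emph{every} $w$. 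Your derivation of the leftmost inequality from the middle expression is also fine.

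The gap is the lower bound on $p_{\rm samp}$ itself. Your assertion that the estimates assemble into ``exactly the advertised shape, a main term $\frac{1-d/n}{ed}$ minus $\frac1{4ed}$'' is not substantiated: with your $w$ the natural leading term is $\frac wn(1-\frac wn)^d\approx\frac{d^d}{(d+1)^{d+1}}$, and nothing in your expansion produces the factor $(1-d/n)/e$. In the paper that specific shape is not a target to be hit but the direct output of the other product form: with $w\approx n/d$ one gets $p_{\rm samp}\ge\frac1d(1-d/n)^{n/d+1}$, and then the Taylor-remainder estimate $(1-x)^{1/x}\ge\frac1e-\frac{x}{2e(1-x)}$ at $x=d/n$ immediately gives $\frac{1-d/n}{ed}-\frac{d/n}{2ed}$, which dominates the displayed middle expression once $d/n\le\frac12$. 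Your scheme can plausibly reach the outer endpoint $p_{\rm samp}\ge\frac1{2ed}$ (which is all the downstream theorems actually use), but as written the bookkeeping that simultaneously absorbs the defect $\frac{d^d}{(d+1)^{d+1}}<\frac1{ed}$, the integer rounding of $w$, and the $O(d/n)$ product correction into the single $\frac1{4ed}$ budget is asserted rather than carried out.
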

\begin{proof}
Let $w= \lceil\frac{n-d}{d}\rceil$.
We can see that 
$\frac{n}{d}-1 \le w \le \frac{n}{d}$.
Note that 
\begin{align}
p_{\rm samp} &= \frac w n\prod_{j=0}^{w-1} \left(1 - \frac{d}{n-j} \right)
\ge 
\frac w n \left( 1 - \frac{d}{n-w+1}  \right)^{w}
\notag\\
&\ge
\frac{n-d}{dn}\left( 1 - \frac{d}{n(1-1/d)  + 1}  \right)^{n/d}.
\end{align}
Since $n/d \ge 2$ we have 
\begin{align}
p_{\rm samp} 
&\ge 
\frac {n-d}{dn} \left( 1 - \frac{d}{n}  \right)^{n/d}
=
\frac {1}{d} \left( 1 - \frac{d}{n}  \right)^{n/d+1}.
\label{psamp-lowerbound}
\end{align}
Now, note that 
\begin{align}
(1-x)^{1/x} =  1/e - x / (2e) + O(x^2).
\end{align}
Using Taylor's theorem, for $0 < x < 1$, we can show that 
\begin{align}
(1-x)^{1/x} \ge  \frac{1}{e} - \frac{x}{2e(1-x)} .
\end{align}
Therefore 
\begin{align}
\left( 1 - \frac{d}{n}  \right)^{n/d }
&\ge 
\frac{1}{e} - \frac{d/n}{2e(1-d/n)} 
\label{exp-approx-bound}.
\end{align}
Substituting \eqref{exp-approx-bound} into \eqref{psamp-lowerbound} and using $d/n \ge 1/(2d)$ gives the first lower bound for the lemma.

For the second lower bound, note that $1-d/n \ge 1-1/(2d) \ge 3/4$, 
and from the first lower bound we get
$p_{\rm samp} \ge \frac{3}{4ed}
(1-\frac{1}{2(3/4)})
=\frac{3}{4ed}
(1-\frac{1}{4(3/4)})
=
\frac{1}{2ed}$.

For the upper bound, note that 
\begin{align}
p_{\rm samp} \le \frac{w}{n} \left(1 - \frac d n \right) ^w.
\end{align}

The upper bound is a continuous function of $w$, and its derivative is monotone decreasing on $1 \le w \le n$. The derivative is positive when $w=1$ and negative when $w =n$. Hence this upper bound is maximized in the interval $[1,n]$, and is attained when $w = -1/ \log(1-d/n)$ with optimal value 
\begin{align}
\frac{-1}{ en \log(1-d/n)}
\le \frac{1}{ed}.
\end{align}
\end{proof}

Note that when $d$ grows and $d=o(n)$, Lemma \ref{lem:psamp} implies that 
\begin{align}
p_{\rm samp }\ge \frac{1}{ed} + O(1/n).
\end{align}
Note that for a fixed value of $t,k$ and $v$, we have that 
\begin{align} 
\Pr[s_{t,k,v,\alpha} = 1] 
=\case{
0 &   \alpha = N_g(v)\\
1/2 &   \alpha \neq N_g(v)\\
}.
\end{align}
This is because of two reasons. First, if $\alpha = \{a_1, \dots, a_d \} = N_g(v)$, 
the parity of the measured bits $m_{t,k,v}, m_{t,k,a_1},\dots, m_{t}$ must be even,
which means that $s_{t,k,v,\alpha}$ is always equal to zero.
Second, if $\alpha = \{a_1, \dots, a_d \} \neq N_g(v)$, 
the parity of the measured bits $m_{t,k,v}, m_{t,k,a_1},\dots, m_{t}$ is even and odd with equal probability, which means that $s_{t,k,v,\alpha}$ is always equal to 1 with probability $1/2$.
Next, it is easy to see that 
\begin{align} 
\Pr[{\rm Maj}({\bf s}_{k,v,\alpha}) = 1 ]
=\case{
0 &   \alpha = N_g(v)\\
1/2 &   \alpha \neq N_g(v)\\
}.
\label{incorrectalpha-eliminate}
\end{align}
We now specify conditions under which LearnGraphStates fails with probability at most $\epsilon$.
\begin{theorem}\label{thm:noiseless}
Let the conditions of Lemma \ref{lem:psamp} hold, and that $w= \lceil\frac{n-d}{d}\rceil$. Suppose that 
\begin{align}
m \ge  4e d \log(n/\epsilon)  + 4e d^2 \log(ne/d).
\end{align}
Then, using $N=mr$ copies of $|g\>$, the probability that Algorithm LearnGraphState gives the correct output is at least $1-\epsilon$.  
\end{theorem}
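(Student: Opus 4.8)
The plan is to bound the failure probability of LearnGraphState by a union bound over all vertices and all incorrect candidate neighbor sets, and then to convert the resulting product into the claimed lower bound on $m$.

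First I would isolate the failure event. By \eqref{incorrectalpha-eliminate}, the true neighbor set satisfies $\Pr[\mathrm{Maj}(\mathbf{s}_{k,v,N_g(v)})=1]=0$, so $N_g(v)$ is never deleted and hence $N_g(v)\in\mathcal S_v$ holds with certainty. Consequently, LearnGraphState returns the correct answer precisely when, for every $v$, every incorrect $d$-set $\alpha\ne N_g(v)$ with $\alpha\subset V\setminus\{v\}$ has been removed from $\mathcal S_v$. The algorithm therefore fails exactly when some pair $(v,\alpha)$ with $\alpha$ incorrect survives all $m$ outer iterations. There are at most $n\binom{n}{d}$ such pairs, and I would union bound over them.

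Second I would compute the survival probability of one fixed incorrect pair $(v,\alpha)$. At iteration $k$, this pair is examined in FindNbs only when the sampling event $\{v\in W\wedge \alpha\cap W=\emptyset\}$ occurs, which by Lemma~\ref{lem:alpha-sampling} has probability $p_{\mathrm{samp}}$; conditioned on this event the bits $s_{t,k,v,\alpha}$ are defined, and by \eqref{incorrectalpha-eliminate} one has $\mathrm{Maj}(\mathbf{s}_{k,v,\alpha})=1$ with probability $1/2$, in which case $\alpha$ is deleted. Hence $\alpha$ is deleted at iteration $k$ with probability $p_{\mathrm{samp}}/2$. Since each outer iteration draws a fresh string $\mathbf{x}$ and fresh copies from Oracle(), these deletion events are independent across $k$, so
\begin{align}
\Pr[(v,\alpha)\text{ survives all } m \text{ iterations}] = \left(1 - \tfrac{p_{\mathrm{samp}}}{2}\right)^m.
\end{align}
Assembling this with the union bound and $(1-x)^m\le e^{-mx}$ gives $\Pr[\mathrm{fail}]\le n\binom{n}{d}\exp(-m\,p_{\mathrm{samp}}/2)$. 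Using $\binom{n}{d}\le (ne/d)^d$ together with the lower bound $p_{\mathrm{samp}}\ge 1/(2ed)$ from Lemma~\ref{lem:psamp} (whose hypotheses are assumed, and which applies for $w=\lceil(n-d)/d\rceil$), this is at most $\exp\big(\log n + d\log(ne/d) - m/(4ed)\big)$. Demanding the exponent be at most $\log\epsilon$ and rearranging produces exactly $m\ge 4ed\log(n/\epsilon)+4ed^2\log(ne/d)$, matching the hypothesis.

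The ingredients from Lemma~\ref{lem:alpha-sampling}, \eqref{incorrectalpha-eliminate}, and Lemma~\ref{lem:psamp} are already available, so no single step is technically deep. The point that demands the most care is the factorization of the per-iteration deletion probability as $p_{\mathrm{samp}}\cdot\tfrac12$: I must verify that conditioning on the sampling event does not alter the claimed value $\Pr[\mathrm{Maj}=1]=\tfrac12$. This is the case because, once $v\in W$ and $\alpha\cap W=\emptyset$, the measurement bases of the qubits in $\{v\}\cup\alpha$ are fixed, and the marginal parity distribution of $s_{t,k,v,\alpha}$ then depends only on whether $X_v\prod_{j\in\alpha}Z_j$ stabilizes $|g\rangle$, which fails for $\alpha\ne N_g(v)$. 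The remainder is bookkeeping: counting at most $n\binom{n}{d}$ incorrect pairs and matching the two logarithmic terms to the two terms in the stated bound on $m$.
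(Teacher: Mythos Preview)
Your proposal is correct and follows essentially the same route as the paper: union bound over all pairs $(v,\alpha)$ with $\alpha\neq N_g(v)$, survival probability $(1-p_{\mathrm{samp}}/2)^m$ for each pair, then the bound $p_{\mathrm{samp}}\ge 1/(2ed)$ from Lemma~\ref{lem:psamp} together with $\binom{n}{d}\le(ne/d)^d$. The only cosmetic differences are that the paper obtains $(1-p_{\mathrm{samp}}/2)^m$ by summing $\sum_s \binom{m}{s}p_{\mathrm{samp}}^s(1-p_{\mathrm{samp}})^{m-s}2^{-s}$ over the number $s$ of times $\alpha$ is sampled (you multiply the per-iteration survival probabilities directly), and the paper uses $\log\frac{1}{1-x}\ge x$ where you use $(1-x)^m\le e^{-mx}$; these are equivalent.
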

\begin{remark}
To use LearnGraphState in the noiseless setting, we can set $r=1$, so that 
$N \ge 4e d \log(n/\epsilon)  + 4e d^2 \log(ne/d)$ copies of $|g\>$ suffices to learn $|g\>$ with probability at least $1-\epsilon$.
\end{remark}
\begin{proof}[Proof of Theorem \ref{thm:noiseless}]
It suffices to show that the probability that LearnGraphState finds some $v \in V$ for which $\mathcal S_v \neq \{N_g(v)\}$ is at most $\epsilon$.

For any $v \in V$, suppose that an $\alpha \neq N_g(v)$ has been sampled $s$ times by LearnGraphState.
Using \eqref{incorrectalpha-eliminate}, the probability that a random $\mathcal S_v$ contains $\alpha$ is $2^{-s}$.
The number $s$ ranges from 0 to $m$. 
Hence, at the conclusion of LearnGraphState, 
\begin{align}
&\Pr[\alpha \in \mathcal S_v ]
\notag\\
=& \sum_{s=0}^m \binom m  s p_{\rm samp}^s  (1 - p_{\rm samp})^{m-s} 2^{-s} \notag\\
=& \sum_{s=0}^m \binom m  s (p_{\rm samp}/2)^s  (1 - p_{\rm samp})^{m-s}  \notag\\
 =& (p_{\rm samp}/2  + 1-p_{\rm samp})^m = (1-p_{\rm samp}/2)^m. 
\end{align}
Applying the union bound on all of the vertices $v$, and on all $d$-sets that are subsets of 
$V \backslash \{v\}$, we must have 
\begin{align}
n \binom {n-1}d (1-p_{\rm samp}/2)^m \le \epsilon.
\end{align}
The above inequality is equivalent to 
\begin{align}
m \log  \left( \frac{1}{1-p_{\rm samp}/2} \right) \ge \log \left( \frac{n \binom {n-1}d}{\epsilon} \right). \label{eq:base-ineq}
\end{align}
Now $\log  \left( \frac{1}{1-p_{\rm samp}/2} \right)
\ge  p_{\rm samp}/2$. Choosing $w= \frac{n-d}{d}$, from Lemma \ref{lem:psamp}, we get
$p_{\rm samp} \ge 1/(2ed)$.
Since we also have $\binom {n-1}{d} \le \binom n d \le \left(  ne/d\right) ^d$,
for \eqref{eq:base-ineq} to hold, 
it suffices to require the following inequality to hold
\begin{align}
m 
\ge (p_{\rm samp}/2)^{-1}
(\log(n/\epsilon)  + d \log(ne/d)). 
\label{eq:base-ineq2}
\end{align}
\end{proof}

\section{Learning with noise}
Now consider errors that afflict our quantum graph state. We model noise using the the qubit depolarizing channel $\mathcal D_p$, which applies the identity operator on a qubit with probability $1-p$, and with probability $p/3$, applies a bit-flip $X$, a phase flip $Z$, or both a bit-flip and a phase-flip $Y=iXZ$.
On every $|g\>$ obtained from each query of Oracle(), the $n$-qubit depolarizing channel $\mathcal D_p^{\otimes n}$ acts on $|g\>$ before the product measurements are performed.

Depolarizing noise affects measurement outcomes in the bases $B_0$ and $B_1$ in a simple way. If we measure a qubit with density matrix $\tau$ in the basis $B_0$,
the probability of obtaining the outcome $|0\>$ and $|1\>$ is 
$\Tr(\frac{I+Z}{2} \tau )$ and $\Tr(\frac{I-Z}{2} \tau )$ respectively. 
Since $\frac{I + Z}{2} Z = \frac{I + Z}{2}$, a $Z$ error does not change measurement outcomes in the basis $B_0$. 
Similarly an $X$ does not change the measurement outcome in the basis $B_1$. 
On the other hand, an $X$ error flips the measurement outcome in the $B_0$ basis, a $Z$ error flips the measurement outcome in the $B_1$ basis, and a $Y$ error flips the measurement outcome in both bases. 
Hence, for both bases $B_0$ and $B_1$, the probability that a qubit measurement outcome is flipped is $2p/3$. 

Here, Lemma \ref{lem:noisy-false-rejection} gives the probabilities that the majority function in FindNbs evaluates to 1.
\begin{lemma} \label{lem:noisy-false-rejection}
Suppose that for some real $p$ where $0 \le p < 3/4$, errors modeled by $\mathcal D_p^{\otimes n}$ occur on every oracle output $|g\>$.
Then for a fixed $v$ and $k$, 
\begin{align} 
\Pr[{\rm Maj}({\bf s}_{k,v,\alpha}) = 1 ]
=\case{
\eta &, \alpha = N_g(v) \\
1/2 &, \alpha \neq N_g(v) \\
},
\label{eq:eta}
\end{align}
where
$\eta = \sum_{t > n/2}\binom r t \left(\frac{1}{2} - \gamma\right)^t 
\left(\frac{1}{2} + \gamma\right)^{r-t}$
and
\begin{align}
\gamma = \frac{(1-4p/3)^{d+1}}{2}. \label{def:gamma}
\end{align}
\end{lemma}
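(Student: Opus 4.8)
The plan is to reduce the noisy measurement statistics to a purely classical independent bit-flip model on the noiseless outcomes, and then read off both cases from a binomial distribution on the parity bits $s_{t,k,v,\alpha}$. First I would make precise the reduction sketched just before the lemma. Writing $\mathcal D_p$ as the Pauli channel $\rho \mapsto (1-p)\rho + \tfrac p3(X\rho X + Y\rho Y + Z\rho Z)$, the noisy state $\mathcal D_p^{\otimes n}(\langle g|g\rangle)$ is a mixture $\sum_P \Pr[P]\,P|g\rangle\langle g|P$ over Pauli products $P$ with independent single-qubit factors. Conditioning on $P$ and using that a projector $M_b^{(j)}$ onto a $B_{x_{k,j}}$-outcome obeys $P_j^\dagger M_b^{(j)} P_j = M_{b\oplus\phi_j}^{(j)}$ for a flip bit $\phi_j\in\{0,1\}$, the joint outcome vector is exactly the noiseless outcome vector XORed with $(\phi_1,\dots,\phi_n)$. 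Marginalizing over $P$, the $\phi_j$ are independent across $j$, independent of the noiseless outcomes, and --- as recorded in the paragraph preceding the lemma --- each equals $1$ with probability $2p/3$ in either basis. Since distinct $t$ use independent oracle copies, the flip bits are also independent across $t$. Thus each recorded outcome factors as $m_{t,k,j}=m^\circ_{t,k,j}\oplus\phi_{t,k,j}$, where $m^\circ$ is the ideal outcome and all $\phi$'s are i.i.d.\ $\mathrm{Bernoulli}(2p/3)$.

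Next I would propagate this to the parity $s_{t,k,v,\alpha}=m_{t,k,v}\oplus\bigoplus_{j=1}^d m_{t,k,a_j}$, which equals the noiseless parity $s^\circ_{t,k,v,\alpha}$ XORed with the parity of the $d+1$ flip bits on qubits $v,a_1,\dots,a_d$. Using $\mathbb E[(-1)^{\phi}]=1-2\cdot(2p/3)=1-4p/3$ and independence, the XOR of these $d+1$ flips equals $1$ with probability $\tfrac12\bigl(1-(1-4p/3)^{d+1}\bigr)=\tfrac12-\gamma$ with $\gamma$ as in \eqref{def:gamma}.

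For $\alpha=N_g(v)$ the noiseless parity is identically $0$ by the stabilizer constraint $W_v|g\rangle=|g\rangle$, so $s_{t,k,v,\alpha}\sim\mathrm{Bernoulli}(\tfrac12-\gamma)$ independently across $t$; the weight of $\mathbf s_{k,v,\alpha}$ is $\mathrm{Binomial}(r,\tfrac12-\gamma)$, and $\Pr[\mathrm{Maj}=1]$ is its upper tail $\Pr[\text{weight}>r/2]$, which is exactly $\eta$ (the tie at weight $r/2$, for even $r$, being split by the fair coin $\hat Q$). For $\alpha\neq N_g(v)$ the noiseless parity is already $\mathrm{Bernoulli}(\tfrac12)$ by \eqref{incorrectalpha-eliminate}, and XORing a uniform bit with an independent one stays uniform, so $s_{t,k,v,\alpha}\sim\mathrm{Bernoulli}(\tfrac12)$ independently; the symmetry of the fair binomial together with the symmetric tie-breaking gives $\Pr[\mathrm{Maj}=1]=\tfrac12$.

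The only genuinely delicate step is the first one: justifying that depolarizing noise followed by a product measurement is \emph{exactly} equivalent to an independent per-qubit bit-flip channel on the ideal outcomes, with flip probability $2p/3$ in both bases, and with flips independent of the outcomes and across copies. Everything after that is a routine binomial computation; in particular, recognizing $\gamma=\tfrac12(1-4p/3)^{d+1}$ as half the bias of a parity of $d+1$ independent $2p/3$-flips follows immediately from the $\mathbb E[(-1)^{\phi}]$ identity. I would also flag that the hypothesis $0\le p<3/4$ forces $1-4p/3>0$, hence $\gamma>0$ and $\tfrac12-\gamma\in[0,\tfrac12)$, which is precisely what makes $\eta$ a genuine (and small) false-rejection probability.
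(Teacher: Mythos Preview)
Your argument is correct and follows the same route as the paper: reduce depolarizing noise followed by $B_0/B_1$ measurements to independent per-qubit bit-flips with rate $2p/3$ on the noiseless outcomes, compute the parity of $d{+}1$ such flips (your $\mathbb E[(-1)^{\phi}]=1-4p/3$ identity is exactly the paper's ``generating function'' computation $\tfrac12\bigl((1-q+q)^{d+1}-(1-q-q)^{d+1}\bigr)$ with $q=2p/3$), and then read off the binomial tail for the majority. The paper is terser---it takes the bit-flip reduction from the preceding paragraph and treats the $\alpha\neq N_g(v)$ case in one line---while you spell out the Pauli-mixture conditioning and the XOR-with-uniform step; but the substance is identical.
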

\begin{proof}
The second result of Lemma \ref{lem:noisy-false-rejection} follows directly from \eqref{incorrectalpha-eliminate},
because the presence of depolarizing errors does not affect the probability of the measurement outcomes in both bases $B_0$ and $B_1$ when $\alpha \neq N_g(v)$. 
Hence only the first result of the lemma is non-trivial. 

Let $q = 2p/3$.
We now prove \eqref{eq:eta}.
Using the theory of generating functions, 
for any $t = 1, \dots, r$, we have that 
\begin{align}
&\Pr[s_{t,k,v,\alpha} = 1 |  \alpha = N_g(v)]  \notag\\
=&\frac{(1-q+q)^{d+1} - (1-q-q)^{d+1}}{2}  \notag\\
=& \frac{1}{2} - \frac{(1-2q)^{d+1}}{2} = \frac{1}{2} - \gamma.
\end{align}
The result follows from the independence of the random variables 
$s_{1,k,v,\alpha}, \dots, s_{t,k,v,\alpha}$ and definition of the majority function in \eqref{def:maj}.
\end{proof}
\begin{theorem}\label{thm:noisy}
Let $p,\epsilon \in \mathbb R$ such that $0 \le p \le 3/4$ and $\epsilon > 0$.
Suppose that every oracle evaluation in LearnGraphState returns a noisy graph state $\mathcal D_p^{\otimes n}(|g\>\<g|)$.
Then there exists some value of $N$ for which LearnGraphState learns $|g\>$ correctly with probability at least $1-\epsilon$ using $N$ copies of $|g\>$ where
\begin{align}
N =
O\left(
\frac{1-4\gamma^2}{\gamma^2}
f(\epsilon, n,d)g(\epsilon, n,d)
\right),
\end{align}  
where 
$f(\epsilon, n,d) =  \log(\epsilon^{-1})  + \log(d \log n) $,
$g(\epsilon,n,d) =  d \log\left(\epsilon^{-1}\right)  + d^2 \log n$,
and $\gamma$ is as given in \eqref{def:gamma}.
\end{theorem}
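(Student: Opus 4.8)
The plan is to bound the failure probability of LearnGraphState via a union bound over two failure modes and to allocate a budget of $\epsilon/2$ to each. Mode~(I) is that for some vertex $v$ the correct neighbourhood $N_g(v)$ is erased from $\mathcal S_v$; mode~(II) is that for some $v$ some incorrect $d$-set $\alpha\neq N_g(v)$ survives in $\mathcal S_v$. The algorithm is correct precisely when neither mode occurs, so it suffices to drive $\Pr[\mathrm{I}]$ and $\Pr[\mathrm{II}]$ each below $\epsilon/2$. The two modes decouple cleanly because Lemma~\ref{lem:noisy-false-rejection} shows that the depolarizing noise changes the behaviour of the majority test only for the true neighbourhood.

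Mode~(II) is essentially free. By the second case of Lemma~\ref{lem:noisy-false-rejection}, an incorrect $\alpha$ still satisfies $\Pr[\mathrm{Maj}(\mathbf s_{k,v,\alpha})=1]=1/2$ even under $\mathcal D_p^{\otimes n}$, exactly as in the noiseless setting. Consequently the survival computation in the proof of Theorem~\ref{thm:noiseless} carries over verbatim, giving $\Pr[\alpha\in\mathcal S_v]=(1-p_{\mathrm{samp}}/2)^m$; after the union bound $n\binom{n-1}{d}(1-p_{\mathrm{samp}}/2)^m\le\epsilon/2$ and the estimate $p_{\mathrm{samp}}\ge 1/(2ed)$ from Lemma~\ref{lem:psamp}, this yields the sufficient condition $m=O\big(d\log(1/\epsilon)+d^2\log n\big)=O(g(\epsilon,n,d))$. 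I would therefore fix $m=\Theta(g)$ to handle mode~(II), which is where the factor $g$ in the final bound originates.

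Mode~(I) is the genuinely new ingredient and carries the noise-dependent factor. For the true $\alpha=N_g(v)$, each round in which it is sampled deletes it with probability $\eta$ by the first case of Lemma~\ref{lem:noisy-false-rejection}, and it is sampled in round $k$ with probability $p_{\mathrm{samp}}$; hence $\Pr[N_g(v)\text{ deleted}]=1-(1-p_{\mathrm{samp}}\eta)^m\le m\,p_{\mathrm{samp}}\,\eta$, and a union bound over the $n$ vertices demands $\eta\le\delta$ with $\log(1/\delta)$ of order $f(\epsilon,n,d)$ (using $m\,p_{\mathrm{samp}}=O(g/d)$). The crux is a tail bound on $\eta$, the probability that a majority of the $r$ independent bits, each equal to $1$ with probability $1/2-\gamma$, exceeds $r/2$. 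A Chernoff bound with the binary relative entropy gives $\eta\le\exp\!\big(-r\,D(1/2\,\|\,1/2-\gamma)\big)=(1-4\gamma^2)^{r/2}$, since $D(1/2\,\|\,1/2-\gamma)=-\tfrac12\log(1-4\gamma^2)$. Thus $\eta\le\delta$ as soon as $r\ge \frac{2\log(1/\delta)}{-\log(1-4\gamma^2)}$, and relating $-\log(1-4\gamma^2)$ to $\gamma^2$ yields $r=O\big(\tfrac{1-4\gamma^2}{\gamma^2}\,f\big)$.

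Combining the two modes gives $N=mr=O(g)\cdot O\big(\tfrac{1-4\gamma^2}{\gamma^2}f\big)=O\big(\tfrac{1-4\gamma^2}{\gamma^2}f g\big)$, as claimed. I expect the main obstacle to be the tail estimate for $\eta$: one must extract exactly the factor $\tfrac{1-4\gamma^2}{\gamma^2}$, which correctly vanishes in the noiseless limit $p\to 0$ (where $\gamma\to 1/2$ and no repetitions are needed), rather than the cruder $1/\gamma^2$ that a plain Hoeffding bound would produce; this forces the use of the relative-entropy (or variance-aware Bernstein) form of the Chernoff bound and a careful analysis of $-\log(1-4\gamma^2)$ near both $\gamma=0$ and $\gamma=1/2$. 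The secondary, bookkeeping obstacle is splitting the $\epsilon$-budget so that the repetition count $r$ needs only a logarithmic factor of order $f$ while the number of rounds $m$ absorbs the combinatorial union bound of order $g$.
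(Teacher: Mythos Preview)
Your proposal is correct and follows essentially the same argument as the paper: split the failure into the two modes, handle the survival of wrong $\alpha$'s exactly via the computation in Theorem~\ref{thm:noiseless} to fix $m=O(g)$, bound the deletion probability of the true $N_g(v)$ as $1-(1-\eta\,p_{\rm samp})^m$, and then choose $r$ via a Chernoff-type tail bound on $\eta$ to get $r=O\big(\tfrac{1-4\gamma^2}{\gamma^2}f\big)$. The paper linearises the logarithm where you linearise the power $(1-x)^m$, and it invokes the additive Chernoff bound $\eta\le\exp(-\gamma^2 r/(1-4\gamma^2))$ directly rather than passing through the KL form $(1-4\gamma^2)^{r/2}$, but the structure and the resulting product $N=mr$ are identical.
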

\begin{proof}[Proof of Theorem \ref{thm:noisy}]
The first part of the proof is to find a lower bound on $m$ for which the probability that some $\alpha  \neq  N_g(v)$ belongs to some $\mathcal S_v$ is at most $\epsilon/2$. This lower bound 
\begin{align}
m \ge  4e d \log\left(n/(2\epsilon)\right)  + 4e d^2 \log(ne/d)
\label{eq:mlower-noisy}
\end{align}
 can be obtained directly from Theorem \ref{thm:noiseless}.

The second part of the proof is to find an upper bound on $m$ for which the probability that $\alpha \notin \mathcal S_v$ is at most $\epsilon/2$ if $\alpha  = N_g(v)$.
Now, if $\alpha  = N_g(v)$ is sampled $s$ times, the probability that $\alpha$ is eliminated from $\mathcal S_v$ is $1-(1-\eta)^{s}$.

The overall probability that 
$\alpha$ is eliminated from $\mathcal S_v$ is
\begin{align}
&\sum_{s=0}^m
\binom m s
p_{\rm samp}^s (1-p_{\rm samp})^{m-s}(1-(1-\eta)^s) \notag\\
=&
1- ( p_{\rm samp}(1-\eta) + 1-p_{\rm samp}  )^m\notag\\
=&
1- (1 - \eta p_{\rm samp}  )^m.
\end{align}
We require that
$1- (1 - \eta p_{\rm samp}  )^m \le \frac{ \epsilon }{2}$.
This is equivalent to requiring that
\begin{align}
 (1 - \eta p_{\rm samp}  )^m \ge 1- \frac{\epsilon}{2}.
 \label{ineq:noisy1}
\end{align}
For \eqref{ineq:noisy1} to hold, since $e^{-\epsilon /2}  \ge 1-\epsilon/2$ for $0 \le \epsilon \le 1$, it suffices to require that 
\begin{align}
 (1 - \eta p_{\rm samp}  )^m \ge e^{- \epsilon/2}.
 \label{ineq:noisy2}
\end{align}
Taking the logarithm on both sides of \eqref{ineq:noisy2} shows that \eqref{ineq:noisy2} is equivalent to 
\begin{align}
m \le
 \frac{-\epsilon/2}{\log(1- \eta p_{\rm samp})}
\label{ineq:noisy3}. 
\end{align} 
Since 
\begin{align}
-\log (1-\eta p_{\rm samp} )
\le \sum_{j\ge 1 }^\infty  \eta^j p_{\rm samp}^j 
= \eta p_{\rm samp}/ (1-\eta p_{\rm samp})
\end{align}
whenever $\eta p_{\rm samp} < 1$, 
for \eqref{ineq:noisy3} to hold, it suffices to require that 
\begin{align}
m \le \frac{\epsilon ( 1 - \eta p_{\rm samp} ) }{2\eta p_{\rm samp}}
\label{ineq:noisy4}. 
\end{align} 
For \eqref{ineq:noisy4} to hold, using the upper bound $p_{\rm samp} \le 1/(ed)$ from Lemma \ref{lem:psamp}, it suffices to require that
\begin{align}
m \le \frac{\epsilon ed ( 1 - \eta/(ed) ) }{2 \eta }
\label{ineq:noisy5}. 
\end{align} 
Using the additive form of the Chernoff bound on $\eta$, we get
\begin{align}
\eta \le \exp\left({-\gamma^2 r / ( 1- 4\gamma^2)}\right).
\label{eq:eta-chernoff}
\end{align}
The trivial upper bound $\eta \le 1$ and the Chernoff upper bound on $\eta$ in \eqref{eq:eta-chernoff} imply that for \eqref{ineq:noisy5} to hold, it suffices to require that 
\begin{align}
m \le \frac{\epsilon}{2}(ed - 1 ) \exp\left({\gamma^2 r / ( 1- 4\gamma^2)}\right)
\label{ineq:noisy6}. 
\end{align} 
For the upper bound \eqref{ineq:noisy6} on $m$ to be larger than the lower bound \eqref{eq:mlower-noisy} on $m$, 
it suffices to require that 
\begin{align}
r = 
\left\lceil
\frac{1-4\gamma^2}{\gamma^2}\log\left(\frac{8ed \log(n/(2\epsilon) ) + 8ed^2 \log(ne/d)}{\epsilon(ed-1)}\right)
\right\rceil,
\end{align}
and can set $r$ to be slightly larger to ensure that the conditions in Lemma \ref{lem:psamp} hold.
Evaluating the corresponding lower bound on $N = rm$ asymptotically then gives the result.
\end{proof}

\section{Converse bound}

In this section, assuming $d=o(\sqrt n)$, we derive a lower bound on the minimum number of copies $N$ of noisy $n$-qubit graph states we require to learn it. 
We prove Theorem \ref{thm:converse} by connecting the problem of learning a graph state with that of transmitting classical information over a quantum channel, and counting the asymptotic number of graphs in $G_{n,d}$.

\begin{theorem}\label{thm:converse}
Let every query to Oracle() return a noisy graph state $\mathcal D_p^{\otimes n}(|g\>\<g|)$, where $0\le p < 3/4$. 
Let $0 \le \epsilon  < 1$. 
Then as $n$ becomes large, the minimum number of copies of $|g\>$ required to learn $|g\>$ with correctness at least $1-\epsilon$ satisfies the inequality
\begin{align}
N \ge \frac{ d \log_4(nd)
}{  (1- H(2p/3))/(1-\epsilon) + 1/n} ,
\end{align}
where $H$ denotes the binary entropy function. 
\end{theorem}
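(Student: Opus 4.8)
My plan is to view the learning task as a communication problem: the choice of graph $g \in G_{n,d}$ encodes a classical message, the oracle queries together with the depolarizing noise form a channel, and a successful learner is a decoder. The core inequality should then come from Fano's inequality (or the Holevo-bound converse for classical-quantum channels), which relates the number of distinguishable messages, the per-copy channel capacity, and the error probability $\epsilon$.

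First I would fix the message set to be all of $G_{n,d}$ and argue that $|G_{n,d}| \ge (nd)^{d/2}$ asymptotically when $d = o(\sqrt{n})$; this is the combinatorial counting step that supplies the $d \log_4(nd)$ numerator, and taking $\log_4$ (i.e., working in base-$4$ logarithms, natural for qubit channels since two classical bits live in a qubit) should make the constants line up. Second I would bound the information that $N$ noisy copies can carry: each copy is $\mathcal D_p^{\otimes n}(|g\>\<g|)$, and the relevant single-copy mutual information is controlled by the depolarizing parameter. The term $1 - H(2p/3)$ in the denominator strongly suggests that the effective per-qubit capacity is $1 - H(2p/3)$, matching the earlier observation in the excerpt that each measured qubit outcome is flipped with probability exactly $2p/3$ under $\mathcal D_p$; so I would reduce the quantum channel to a classical binary symmetric channel with crossover $2p/3$ and invoke its capacity. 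Third I would assemble Fano: for a decoder succeeding with probability $\ge 1-\epsilon$, the transmittable information must be at least roughly $(1-\epsilon)\log_4 |G_{n,d}|$ minus lower-order correction terms, while the channel can deliver at most $N \cdot (\text{per-copy capacity})$. Rearranging yields the stated bound, with the $1/n$ term arising from the additive slack in Fano's inequality (the constant $H(\epsilon)$ or a $\log$ term divided by the total alphabet size).

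The main obstacle, I expect, is justifying that joint or entangling measurements across the $N$ copies cannot beat this bound — that is, getting a converse that holds for arbitrary measurements, not just product ones. This requires the Holevo bound (or the quantum Fano inequality), asserting that the accessible classical information extractable from $N$ copies of the state ensemble $\{\mathcal D_p^{\otimes n}(|g\>\<g|)\}_{g}$ is at most $N$ times the single-copy Holevo quantity, and then upper-bounding that Holevo quantity by $n(1 - H(2p/3))$ via the fact that the depolarizing channel on each independent qubit degrades any encoded classical information exactly as a BSC does. Pinning down why the per-copy Holevo information is bounded by the \emph{classical} capacity $1 - H(2p/3)$ per qubit, rather than something larger exploiting coherence across qubits, is the step that demands care; I would argue it via the additivity of the depolarizing channel's capacity over the $n$ tensor factors and the stabilizer structure of graph states, which confines the message-distinguishing information to the classical bit-flip statistics the measurements can ever see.
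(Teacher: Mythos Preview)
Your skeleton matches the paper's: cast learning as channel coding, apply Fano against the classical capacity of the depolarizing channel, and count $d$-regular graphs. Two points need correction.

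First, your count $|G_{n,d}| \ge (nd)^{d/2}$ is off by a factor of $n$ in the exponent. The McKay--Wormald asymptotic (which the paper invokes) gives $\log |G_{n,d}| = \tfrac{1}{2}\, nd \log(nd)\,(1+o(1))$, i.e.\ $|G_{n,d}| \approx (nd)^{nd/2}$. That extra $n$ is precisely what cancels the $n$ qubits per copy and leaves the numerator $d\log_4(nd)$; with your stated count the bound would be weaker by a factor of $n$.

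Second, the BSC reduction you propose only applies to product measurements in the bases $B_0,B_1$ and says nothing about arbitrary joint measurements, as you yourself flag. The paper does not attempt any such reduction: it directly cites King's theorem that the classical capacity of $\mathcal D_p$ (over all encodings and all collective decodings) equals $1-H(2p/3)$, and that this capacity is additive over the $nN$ tensor factors. This single citation is the entire ``main obstacle'' you identify. No appeal to the stabilizer structure of graph states is needed or made---the converse holds for any input ensemble sent through $\mathcal D_p^{\otimes nN}$, so the last clause of your plan is a red herring.
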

\begin{proof}
The total number of qubits used to transmit information about our graph state is $nN$.
The number of bits needed to describe a $d$-regular graph is $|G_{n,d}|$.
Hence we can transmit $\log_2 |G_{n,d}|$ bits of classical information using $nN$ qubits, which corresponds to a transmission rate of 
\begin{align}
R = \frac{\log_2 | G_{n,d} |}{nN}.
\end{align}
From King's result \cite{king2003capacity}, we know that the classical capacity of the depolarizing channel $\mathcal D_p$ is
\begin{align}
1-H(2p/3).
\end{align}
From Fano's inequality, if $\epsilon < 1$, the optimal $R$ is at most 
$R \leq \frac{C}{1-\epsilon} + \frac{1}{n}$ for all $n$.
Hence we must have
\begin{align}
N \ge 
\frac{\log_2 | G_{n,d} |
 }{ n (1- H(2p/3) )/(1-\epsilon) + 1} . \label{eq:converse1}
\end{align}

From \cite{mckay1991asymptotic}, whenever $d= o(\sqrt n)$, 
we know that
\begin{align}
|G_{n,d} |=
\frac{(nd)!}{(nd/2)! 2^{nd/2} (d!)^n} \exp\left(  - \frac{d^2-1}{4} - \frac{d^3}{12n} + O(d^2/n) \right).
\end{align}
Since $\log(n!) = n \log n - n +O(\log n)$, we find that 
\begin{align}
\log |G_{n,d} |
=& nd \log (nd) - nd + O(\log(nd) ) \notag\\
&- (nd/2) \log (nd/2) + nd/2 + O(\log(nd/2))\notag\\
&-(nd/2) \log 2 - n (d \log d -d + O(\log d) )\notag\\
&- \frac{d^2-1}{4} - \frac{d^3}{12n} + O(d^2/n) .
\end{align}
Simplifying this, we see that whenever $d= o(\sqrt n)$, we have 
\begin{align}
\log |G_{n,d} |  
&= \frac{1}{2} nd \log(nd) + O(n d \log d)\notag\\
&= \frac{1}{2} nd \log(nd)\left( 1 + O\left( \frac{\log d}{\log n d}\right)\right).
\label{logGnd}
\end{align}
Substituting \eqref{logGnd} into \eqref{eq:converse1} gives the result.
\end{proof}

Note that Theorem \ref{thm:converse} implies that for constant $p$ and when $\epsilon < 1/2$, we have a lower bound of $N = \Omega(d \log n)$.

The lower bound $N = \Omega(\log(1/\epsilon)$ follows from state discrimination. Consider two graph states in the set that are are as close as possible. The probability of making a discrimination error using an optimal strategy is exponentially small (with the quantum Chernoff exponent~\cite{PhysRevLett.98.160501}). Solving for $N$ gives this dependence.

\section{Discussions}\label{sec:discussions}
Our analysis of learning graph states using product measurements leaves a number of open problems. 
Since our converse bound applies for learning algorithms that use arbitrary measurements, this leaves open the question as to whether our algorithm is asymptotically optimal for product measurements. 
For instance, one question pertaining to our lower bound for $m$ in Theorem \ref{thm:noiseless} is, whether the quadratic scaling with respect to $d$ is necessary.
For future work, it would also be interesting to know if an adaptive algorithm can asymptotically outperform our randomized algorithm.

\section{Acknowledgements}\label{sec:acknow}
Y.O. is supported by the Quantum Engineering Programme grant NRF2021-QEP2-01-P06.
M.T and Y.O are supported in part by NUS startup grants (R-263-000-E32-133 and R-263-000-E32-731). 

\bibliography{ref}{}
\bibliographystyle{ieeetr}

\end{document}